\newtheorem{lemma}{Lemma}
\newtheorem{theorem}{Theorem}
\newtheorem{corollary}{Corollary}
\newtheorem{definition}{Definition}
\newcommand\ket[1]{\ensuremath{|#1\rangle}}
\newcommand\bra[1]{\ensuremath{\langle#1|}}
\newcommand\oprod[2]{\ensuremath{|#1\rangle\langle#2|}}
\newcommand\tr{\mathop{\rm tr}\nolimits}
\newcommand\class[1]{{\bf\textup{#1}}}
\title{\Large\bf Non-Identity Check Remains QMA-Complete for Short Circuits}
\author{
  Zhengfeng Ji \\
  \it \small Perimeter Institute for Theoretical Physics, Waterloo Ontario, Canada.
  \and
  Xiaodi Wu \\
  \it \small Institute for Quantum Computing, University of
  Waterloo, Ontario, Canada, \\
  \it \small Department of Electrical Engineering and Computer Science, University of Michigan, Ann Arbor, USA.
}
\date{\normalsize June 30, 2008}
\begin{document}


\maketitle

\begin{abstract}
  The Non-Identity Check problem asks whether a given a quantum
  circuit is far away from the identity or not. It is well known that
  this problem is \class{QMA}-Complete~\cite{JWB05}. In this note, it
  is shown that the Non-Identity Check problem remains
  \class{QMA}-Complete for circuits of short depth. Specifically, we
  prove that for constant depth quantum circuit in which each gate is
  given to at least $\Omega(\log n)$ bits of precision, the
  Non-Identity Check problem is \class{QMA}-Complete. It also follows
  that the hardness of the problem remains for polylogarithmic depth
  circuit consisting of only gates from any universal gate set and for
  logarithmic depth circuit using some specific universal gate set.
\end{abstract}

\section{Introduction}

\label{sec:intro}

Quantum circuit is the natural quantum analog of classical circuit and
an important model~\cite{Yao93} to analyze the power of quantum
computation. A \emph{quantum circuit} is an acyclic network of
\emph{quantum gates} connected by wires. The quantum gates represent
feasible quantum operations (unitary operations in our model),
involving constant numbers of qubits. The \emph{depth} of a circuit is
the maximum number of quantum gates affecting on any qubit from input
to output.

Much of the difficulty in implementing quantum computation is the
decoherence effect of the qubits which happens in a very short time.
Short depth quantum circuit seems to provide a way to implement as
much quantum computation as possible in very limited available time
due to the decoherence effect. Thus, analyzing the power of short
depth quantum circuit is of significant interest.

A few examples about the power of logarithmic depth quantum circuit
have been proposed in the past few years~\cite{CW00,MN02}. Besides, a
systematic procedure has also been discovered~\cite{BE09} to
parallelize a class of quantum circuits to logarithmic depth. The
investigation of the power of constant depth quantum circuit has also
been started recently~\cite{FGHZ05,TD04}. In this paper, we prove the
hardness of the Non-Identity Check problem for such short depth
quantum circuits.

The \emph{Non-Identity Check} problem is to decide if a quantum
circuit is far away from the identity, given a classical description
of the circuit. More generally, one can ask whether two quantum
circuits $U$ and $V$ are equivalent or not. But is it easy to see that
the equivalence problem can be reduced to the identity check problem
of $UV^\dagger$. Classically, similar
problems~\cite{BCW80,Sch80,Zip79} determine whether two given
classical circuits are equivalent or not. It turns out that the
classical problem can be solved efficiently using a randomized
algorithm. That is, the classical problem is in \class{BPP}. In
contrast, we know that the quantum Non-Identity Check problem is
\class{QMA}-Complete~\cite{JWB05}. This means that the problem is hard
even for quantum computers. Moreover, as will be shown in this paper,
the hardness remains even when only short depth circuits are
considered.

The complexity class \class{QMA} is the quantum version of \class{NP}.
It differs from \class{NP} in that the witness can be a quantum state
and that the verifier has the power of performing polynomial time
quantum computation. A lot has been known about this complexity class.
One of the most important facts is that it has a complete problem
which naturally generalizes the Boolean Satisfiability problem. The
first proof of it by Kitaev~\cite{KSV02} serves as the quantum analog
of the Cook-Levin theorem~\cite{Cook71,Lev73}. The survey~\cite{AN02}
may also be helpful in understanding the original proof.

The \emph{Local Hamiltonian} problem has been the first known
important complete problem for \class{QMA} and has also turned out to
be the most studied one. In fact, the last few years have witnessed a
series of improvements on it~\cite{JWB05,KR03,KKR04,OT05,AGIK09},
culminating in the result that the problem remains complete even for
$1$-D local Hamiltonian. Another complete problem for \class{QMA} is
Non-Identity Check~\cite{JWB05}, which is also the main topic of this
paper. There haven't been many \class{QMA}-Complete problems found. In
addition to the Local Hamiltonian and Non-Identity Check problem, we
also know that the Local Consistency problem and related
variants~\cite{Liu06,LMF07,WMN09} and the Quantum Clique
Problem~\cite{BS07} are \class{QMA}-Complete.

The main result of this paper is that Non-Identity Check for short
quantum circuits remains \class{QMA}-Complete. Formally, we have:

\begin{theorem}\label{thm:const-depth}
  Non-Identity Check of constant depth quantum circuit on $n$ qubits
  is \class{QMA}-Complete if the encoding of the circuit describes
  each gate to at least $\Omega(\log n)$ bit of precision.
\end{theorem}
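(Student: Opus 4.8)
The plan is to establish the two directions of \class{QMA}-completeness separately. Membership is inherited from the general case: a constant depth circuit specified to $\Omega(\log n)$ bits per gate is a legitimate instance of ordinary Non-Identity Check, which is in \class{QMA} by~\cite{JWB05}. The only point to check is that the verifier can apply the input circuit; since each gate acts on $O(1)$ qubits and is given to $O(\log n)$ bits of precision, the Solovay--Kitaev theorem compiles it into $\mathrm{poly}(n)$ standard gates, after which the witness is simply the eigenvector $\ket\psi$ that is sent far from itself, and the verifier estimates $|\iprod{\psi}{U\psi}|$ by a Hadamard-type test. Thus the membership direction needs no new idea.

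For hardness I would reduce from the Local Hamiltonian problem rather than from general Non-Identity Check, because its witness already carries all the computational work and its verification is intrinsically local. Fix a \class{QMA}-complete family $H=\sum_{j=1}^m H_j$ with each $H_j$ supported on $O(1)$ qubits and with a bounded-degree interaction graph (as in Kitaev's clock construction), and promise gap $\lambda_{\min}(H)\le a$ versus $\lambda_{\min}(H)\ge b$ with $b-a\ge 1/\mathrm{poly}(n)$. Because the interaction graph has bounded degree, its terms can be properly colored with $O(1)$ colors so that the terms of any one color act on pairwise disjoint qubits. A verifier $V$ that, given the ground-state witness, coherently estimates the local energies then applies one layer of commuting (disjoint) local rotations per color, so $V$ has depth $O(1)$. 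Following the \class{QMA}-hardness construction of~\cite{JWB05}, I would form the circuit $C \approx V^\dagger R_{\mathrm{acc}} V$ restricted to the subspace with correctly initialized ancillas, where $R_{\mathrm{acc}}$ applies a fixed phase on the accepting branch; conjugation only doubles the depth and the phase and ancilla-check gates are local, so $C$ is still depth $O(1)$. A direct computation then shows that $\min_\phi\|C-e^{i\phi}I\|$ is governed by the maximum acceptance probability, hence by $\lambda_{\min}(H)$: in the YES case some witness is rotated far, in the NO case every state picks up nearly the same phase. Since Non-Identity Check tolerates a $1/\mathrm{poly}(n)$ promise gap, the raw $1/\mathrm{poly}(n)$ Local Hamiltonian gap suffices and no depth-increasing amplification is needed; the small rotation angles encoding the entries of the $H_j$ and the gap are exactly what force the $\Omega(\log n)$ bits of precision in the hypothesis.

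The main obstacle is making the verifier \emph{genuinely} constant depth. Coloring parallelizes the individual local checks, but a naive verifier must still combine the $m=\mathrm{poly}(n)$ local energy estimates into a single accept bit, and a fan-in of $\mathrm{poly}(n)$ many bits cannot be done in $O(1)$ depth with bounded-arity gates. The way out is to avoid ever combining them in the circuit: introduce an $O(\log n)$-qubit control register holding a uniform superposition over the term index $j$ and let the controlled local rotations feed a shared output register, so that the \emph{global} quantity $\min_\phi\|C-e^{i\phi}I\|$ --- a maximum over all input states --- performs the averaging that the circuit cannot. Verifying that this coherent construction keeps the depth at $O(1)$, that the controlled rotations remain implementable with $O(1)$-qubit gates at $\Omega(\log n)$ precision, and that the $1/\mathrm{poly}(n)$ gap survives the translation from acceptance probability to operator-norm distance, is where the real work lies; the remaining depth and precision bookkeeping is routine.
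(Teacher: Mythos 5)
Your membership direction is fine, and your instinct to reduce from a Local Hamiltonian whose interaction graph is $O(1)$-colorable is the right starting point. But the hardness argument has a genuine gap exactly where you flag ``the real work'': the verifier-circuit route $C\approx V^\dagger R_{\mathrm{acc}}V$ cannot be made constant depth, and your proposed fix does not repair it. A rotation controlled on an $O(\log n)$-qubit index register is not a gate on $O(1)$ qubits; more fatally, every one of the $m=\mathrm{poly}(n)$ controlled terms acts on the same shared output register, so that single wire already forces depth $\Omega(m)$ --- in a unitary circuit you cannot ``skip'' the gates whose control condition fails, they still occupy layers. So the fan-in obstruction you correctly identify is not overcome, and no constant-depth verifier comes out of the construction. (The restriction ``to the subspace with correctly initialized ancillas'' is also not an operation a circuit can perform, so even granting the depth, $C$ would act nontrivially on bad-ancilla inputs of NO instances and the operator-norm distance would not track the acceptance probability.)

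The paper sidesteps the verifier entirely. Starting from the $1$-D Local Hamiltonian problem, it writes $H=H_{\mathrm{odd}}+H_{\mathrm{even}}$, each a sum of terms on pairwise disjoint particle pairs, and takes the circuit to be a single Trotter step $U_H=e^{iH_{\mathrm{even}}t}e^{iH_{\mathrm{odd}}t}$ for a suitable inverse-polynomial $t$ --- a depth-two circuit of two-particle gates, each needing $\Theta(\log n)$ bits of precision. The aggregation you were trying to build into the circuit happens for free in the spectrum: the phase range satisfies $|\alpha(U_H)-t\,\lambda(H)|\le ct^2$ by Lemmas~\ref{lem:alpha_bound} and~\ref{lem:alpha_t}, where $\lambda(H)$ is the eigenvalue range of $H$. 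One further trick is needed to connect $\lambda(H)$ to the ground energy: add an extra dimension $\ket{d}$ to each particle and the projector $\oprod{d}{d}\otimes\oprod{d}{d}$ to each local term, which pins the top eigenvalue at the known value $r$ so that the eigenvalue-range question becomes exactly the ground-energy question. Neither the single Trotter step nor the top-eigenvalue pinning appears in your proposal, and without something playing their role the reduction does not go through.
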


When a circuit is restricted to consisting of only gates from a finite
universal gate set, we can have the following Corollary, which is a
direct application of the \emph{Solovay-Kitaev Theorem}~\cite{DN05}.

\begin{corollary}\label{cor:polylog-depth}
  Non-Identity Check is \class{QMA}-Complete for
  $O(\log^{\delta}(n))$-depth quantum circuits of an arbitrary
  universal gate set on $n$ qubits where $\delta\approx 3$.
\end{corollary}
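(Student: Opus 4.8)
The plan is to start from a hard instance provided by Theorem~\ref{thm:const-depth} and replace each of its gates by a short word over the fixed universal gate set, controlling the accumulated error so that the Non-Identity Check answer is preserved. Concretely, let $C$ be a depth-$d$ circuit on $n$ qubits with $d=O(1)$ in which each gate is specified to $\Omega(\log n)$ bits, and let the promise thresholds be $a<b$, so that a YES instance has $\min_{\phi}\|C-e^{i\phi}I\|\ge b$, a NO instance has $\min_{\phi}\|C-e^{i\phi}I\|\le a$, and the gap is $\Delta=b-a\ge 1/\mathrm{poly}(n)$.

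First I would invoke the Solovay--Kitaev theorem~\cite{DN05}: for any fixed universal gate set, every $O(1)$-qubit gate admits an $\epsilon'$-approximation in operator norm by a word of length $O(\log^{\delta}(1/\epsilon'))$ over the set, where $\delta\approx 3$ is the Solovay--Kitaev exponent. Replacing each gate $g_i$ of $C$ by such an approximation $\tilde g_i$ produces a circuit $\tilde C$ over the universal gate set, and this replacement is computable in polynomial time.

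Next I would control the total error. A constant-depth circuit on $n$ qubits has only $O(n)$ gates, since the gates in each of its $d=O(1)$ layers act on disjoint qubits. By the standard fact that errors add under composition of unitaries, $\|\tilde C-C\|\le\sum_i\|\tilde g_i-g_i\|=O(n\epsilon')$. Choosing $\epsilon'=\Theta(\Delta/n)$, which is $1/\mathrm{poly}(n)$, forces $\|\tilde C-C\|\le\Delta/3$. Since $X\mapsto\min_{\phi}\|X-e^{i\phi}I\|$ is $1$-Lipschitz in operator norm, the distance-to-identity of $\tilde C$ differs from that of $C$ by at most $\Delta/3$; hence a YES instance still has distance $\ge b-\Delta/3$, a NO instance still has distance $\le a+\Delta/3$, and since $b-a=\Delta$ these remain separated by $\Delta/3$. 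The reduction therefore preserves YES and NO.

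It remains to read off the depth. With $\epsilon'=1/\mathrm{poly}(n)$ we have $\log(1/\epsilon')=O(\log n)$, so each gate expands to a word of length $O(\log^{\delta} n)$; the words replacing the gates of one layer act on disjoint qubits and so compose in parallel, turning each of the $d$ layers into $O(\log^{\delta} n)$ layers for a total depth of $d\cdot O(\log^{\delta} n)=O(\log^{\delta} n)$. Membership in \class{QMA} is immediate, as these instances are a subfamily of general Non-Identity Check. The only point demanding care, and the place where the hypotheses of Theorem~\ref{thm:const-depth} are used, is keeping the per-gate precision at $1/\mathrm{poly}(n)$: this is precisely what makes $\log(1/\epsilon')$ logarithmic and confines the Solovay--Kitaev overhead to polylogarithmic depth, and it is consistent with the $\Omega(\log n)$-bit precision already assumed of the gates of $C$.
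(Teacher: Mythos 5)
Your proposal is correct and follows essentially the same route as the paper: expand each gate of the constant-depth hard instance from Theorem~\ref{thm:const-depth} via Solovay--Kitaev to precision $1/\mathrm{poly}(n)$, note that the errors of the $O(n)$ gates accumulate additively in spectral norm, and observe that the per-layer words run in parallel to give depth $O(\log^{\delta} n)$. The only (harmless) difference is that you control the promise gap through the $1$-Lipschitz property of $X\mapsto\min_{\varphi}\|X-e^{i\varphi}I\|$, whereas the paper routes the same additive error bound through the phase range via Lemma~\ref{lem:alpha_diff}, picking up a factor of $\pi$; both suffice since the gap is inverse polynomial.
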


Interestingly, there are more efficient universal gate sets as shown
in Ref.~\cite{HRC02}. With these special universal gate sets, we could
have even shorter depth quantum circuits. Precisely,

\begin{corollary}\label{cor:log-depth}
  There exists a universal gate set such that Non-Identity Check is
  \class{QMA}-Complete for logarithmic depth quantum circuits using
  this particular universal gate set.
\end{corollary}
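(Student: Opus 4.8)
The plan is to start from the constant-depth, high-precision circuits of Theorem~\ref{thm:const-depth} and \emph{compile} each gate into the special universal gate set of Ref.~\cite{HRC02}, in exactly the way Corollary~\ref{cor:polylog-depth} does with Solovay--Kitaev, but exploiting the better compilation length that this particular gate set affords. Theorem~\ref{thm:const-depth} already gives \class{QMA}-hardness for a constant-depth circuit $C$ in which every gate acts on $O(1)$ qubits and is specified to $\Omega(\log n)$ bits, i.e.\ to precision $1/\mathrm{poly}(n)$. Membership in \class{QMA} is immediate, since log-depth circuits over a fixed gate set are a special case of the general Non-Identity Check problem, which is in \class{QMA}~\cite{JWB05}. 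Hence the whole content is a poly-time reduction producing an equivalent log-depth circuit $C'$ over the chosen gate set.

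The reduction proceeds in three steps. First, I would decompose each $O(1)$-qubit gate of $C$ \emph{exactly} into a constant number of single-qubit rotations and \textsc{cnot}s, using exact universality of $\{\text{single-qubit}, \textsc{cnot}\}$; this keeps the depth constant and leaves a circuit whose only non-fixed pieces are single-qubit gates, each still known to precision $1/\mathrm{poly}(n)$. Second, I would invoke Ref.~\cite{HRC02}: for the gate set it exhibits, any single-qubit unitary is approximable to precision $\epsilon$ by a word of only $O(\log(1/\epsilon))$ gates, matching the information-theoretic optimum and beating the $O(\log^{\delta}(1/\epsilon))$ of the Solovay--Kitaev construction used in Corollary~\ref{cor:polylog-depth}. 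Taking $\epsilon = 1/\mathrm{poly}(n)$ (fixed below) makes each word have length $O(\log n)$. Third, I would substitute these words into the decomposed circuit; because the gates within each of the $O(1)$ layers act on disjoint qubits, their replacements run in parallel, so the total depth is $O(1)\cdot O(\log n)=O(\log n)$, as required.

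Two points need checking. For correctness I would bound the accumulated error by a hybrid argument: if $C$ has $m=\mathrm{poly}(n)$ gates and each single-qubit piece is replaced within $\epsilon$, then the compiled unitary $\tilde U$ satisfies $\lVert \tilde U - U\rVert \le O(m)\,\epsilon$. Since $\min_\phi \lVert \,\cdot - e^{i\phi}I\rVert$ is $1$-Lipschitz, the distance to a scalar moves by at most $O(m)\epsilon$; choosing $\epsilon$ a small enough $1/\mathrm{poly}(n)$ makes this less than half the promise gap (closeness threshold $\beta$, farness threshold $\alpha$, with $\alpha-\beta\ge 1/\mathrm{poly}(n)$), so yes/no instances are preserved while still only $\log(1/\epsilon)=O(\log n)$ gates per rotation are demanded.

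The main obstacle is making the reduction \emph{efficient}, since—unlike Solovay--Kitaev—the optimal $O(\log(1/\epsilon))$ bound of Ref.~\cite{HRC02} is essentially an existence (covering) statement and does not by itself hand us the words. I would resolve this by brute-force search: over a gate set of constant size $g$, the number of words of length $\ell=O(\log n)$ is $g^{\ell}=2^{O(\log n)}=\mathrm{poly}(n)$, so one can enumerate all candidates and select one within $\epsilon$ of each target in polynomial time, with Ref.~\cite{HRC02} guaranteeing such a word exists. This is precisely why the same trick fails for Corollary~\ref{cor:polylog-depth}, where length $O(\log^{\delta} n)$ would make the search quasi-polynomial and force reliance on the explicit Solovay--Kitaev algorithm instead. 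Combining the poly-time enumeration with the depth and error bounds above yields the desired logarithmic-depth \class{QMA}-hardness.
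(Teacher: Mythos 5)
Your proof follows essentially the same route as the paper: compile the constant-depth circuit of Theorem~\ref{thm:const-depth} gate by gate using the efficient universal gate set of Ref.~\cite{HRC02}, run the replacement words in parallel within each layer to get depth $O(\log n)$, and control the accumulated error by the spectral-norm bound $\|\tilde U - U\| \le \sum_i \|E_i\|$ (the paper routes this through Lemma~\ref{lem:alpha_diff} and the phase range $\alpha$ rather than the Lipschitz property of $\min_\varphi\|U - e^{i\varphi}I\|$, but the two are interchangeable by the monotone trigonometric relations noted in Section~\ref{sec:pre}). Your extra observation---that the non-constructive $O(\log(1/\epsilon))$ covering bound of Ref.~\cite{HRC02} can be made algorithmic by brute-force enumeration over the $g^{O(\log n)} = \mathrm{poly}(n)$ candidate words, which is exactly what fails at length $O(\log^{\delta} n)$ and forces Corollary~\ref{cor:polylog-depth} to use the explicit Solovay--Kitaev procedure---is a worthwhile refinement of a point the paper passes over in silence.
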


In previous works where the depth is not an issue, it is not necessary
to distinguish whether the encoding of the circuit uses a fixed
universal gate set or not. But this subtlety is the key point that
makes the difference in Theorem~\ref{thm:const-depth} and
Corollary~\ref{cor:polylog-depth}~and~\ref{cor:log-depth}.

To prove Theorem 1, we will employ the $1$-D local Hamilton problem
(\class{QMA}-Complete) as our starting point, and reduce it to a short
circuit Non-Identity Check problem. The reminder of the paper is
organized as follows. In the next section, some definitions and
notations are summarized. In Section~\ref{sec:main}, our main result
is proved. We conclude with Section~\ref{sec:con}.

\section{Preliminary}

\label{sec:pre}

In this section, we explain the notions used in the rest of the paper.

The \emph{spectral norm} $\|A\|$ of matrix $A$ is defined as
\begin{equation*}
  \|A\| = \max_{\ket{\psi}} \frac{\|A\ket{\psi}\|}{\|\ket{\psi}\|},
\end{equation*}
and the \emph{trace norm} $\|A\|_{\tr}$ defined as
\begin{equation*}
  \|A\|_{\tr} = \tr \sqrt{A^\dagger A}.
\end{equation*}
The \emph{numerical range} of a matrix $A$ is the subset of the
complex plain $\{ \bra{\psi} A \ket{\psi}\}$ and is known to be a
convex set. In particular, for normal matrices the numerical range is
simply the convex hull of all eigenvalues. For any Hermitian matrix
$H$, $\lambda_{\max}(H)$ and $\lambda_{\min}(H)$ are the largest and
smallest eigenvalue of $H$. Denote the eigenvalue range of $H$ by
$\lambda(H) = \lambda_{\max}(H)-\lambda_{\min}(H)$.

The eigenvalues of a unitary matrix $U$ lie on the unit circle of the
complex plain. The distribution of the eigenvalues is important to
characterize the closeness of $U$ to identity $I$. See for example the
illustration made in Figure~\ref{fig:dist} where the eigenvalues of
$U$ are marked on the unit circle as small hollow circles. Use
$\alpha_{\max}(U)$ and $\alpha_{\min}(U)$ to denote the maximal and
minimal value of the arguments of eigenvalues of $U$ taken in the
interval $(-\pi, \pi]$. They correspond to the argument of point $A$
and $C$ in Figure~\ref{fig:dist}. Let $\tilde{\alpha}(U)$ be the
length of the shortest arc that contains all eigenvalues of $U$ (which
corresponds to arc $\stackrel{\frown}{AC}$ in the figure). It was
known that $U$ is perfectly distinguishable from $I$ if and only if
$\tilde{\alpha}(U) \ge \pi$ ~\cite{DFY07}. Define a new quantity
called \emph{phase range} as
\begin{equation}
  \label{eq:alpha}
  \alpha(U) = \min \{ \pi, \tilde{\alpha}(U) \},
\end{equation}
and extend it to be defined on two unitary operations $U$ and $V$ as
\begin{equation}
  \label{eq:alpha2}
  \alpha(U,V) = \alpha(U^\dagger V).
\end{equation}

\begin{figure}[!hbt]
  \centering
  \includegraphics{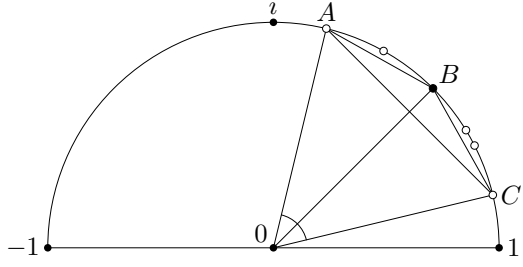}
  \caption{Comparison of different distances}
  \label{fig:dist}
\end{figure}

The diamond norm~\cite{Kit97} serves as a good way of measuring
distance of quantum operations. For a superoperator $\Phi$ mapping
operators acting on Hilbert space $\mathcal{H}_1$ to operators acting
on Hilbert space $\mathcal{H}_2$, define the diamond norm of $\Phi$ as
\begin{equation}
  \|\Phi\|_\diamond = \max_{\rho} \| \Phi\otimes I_{\mathcal{H}_1}
  (\rho) \|_{\tr},
\end{equation}
where the maximum is take over density matrices $\rho$.

Let $\mathcal{U}$ be the quantum operation corresponding to unitary
$U$ as
\begin{equation*}
  \mathcal{U}(\rho) = U\rho U^\dagger.
\end{equation*}
It was known that~\cite{Wat08}
\begin{equation*}
  \|\mathcal{U}-\mathcal{I}\|_\diamond = 2\sqrt{1-\nu^2(U)},
\end{equation*}
where $\mathcal{I}$ is the identity operation and $\nu(U)$ is the
minimum distance of the zero point to the numerical range of $U$. As
$U$ is normal, its numerical range is the convex hull of all of its
eigenvalues and the diamond norm is exactly the length of segment $AC$
in Figure~\ref{fig:dist}. Therefore, we have
\begin{equation*}
  \|\mathcal{U}-\mathcal{I}\|_\diamond = 2 \sin\frac{\alpha(U)}{2}.
\end{equation*}

Another way to measure the closeness of $U$ and $I$ is the following
quantity~\cite{JWB05}:
\begin{equation}\label{eq:norm_dist}
  \min_\varphi \|U-e^{i\varphi} I\|.
\end{equation}
We can also visualize the idea of the definition in
Figure~\ref{fig:dist}. The minimum in Eq.~\eqref{eq:norm_dist} will be
achieved when $\varphi$ is the argument of point $B$ in the middle of
the arc connection $A$ and $C$, and the minimum value is the length of
segment $AB$. Its relation with phase range $\alpha$ when $\alpha(U)<
\pi$ is
\begin{equation*}
  \min_\varphi \|U-e^{i\varphi} I\| = 2\sin\frac{\alpha(U)}{4}.
\end{equation*}
When $\alpha(U) = \pi$, they are not related but we will always have
\begin{equation}\label{eq:sin_alpha}
  \min_\varphi \|U-e^{i\varphi} I\| \ge 2\sin\frac{\alpha(U)}{4}.
\end{equation}

In the rest of this section, we give the definition of complexity
class \class{QMA} and some of its complete problems.

Let $\Sigma$ be the alphabet $\{0, 1\}$ and denote by $|x|$ the length
of string $x$. A family of unitary quantum circuits $\{U_x,
x\in\Sigma^*\}$ is said to be generated in polynomial-time if there is
a classical deterministic Turing machine which, on input $x$, outputs
the encoding of circuit $U_x$ in time polynomial in $|x|$. A circuit
accepts its input state if the first output qubit is measured to be
``1''.

The complexity class \class{QMA} can be defined as follows.

\begin{definition}[QMA]
  A language $L$ is in \class{QMA} if there is a family of circuits
  $\{U_x, x\in\Sigma^*\}$ generated in polynomial-time together with a
  polynomial $m$ such that $U_x$ acts on $m+k$ qubits and the
  following holds:
  \begin{enumerate}
  \item If $x\in L$, there exists an $m(|x|)$-qubit state $\ket{\psi}$
    such that $\Pr \left[U_x \text{ accepts } \ket{\psi} \otimes
      \ket{0}^{\otimes{k(|x|)}} \right] \ge 2/3$;
  \item If $x\not\in L$, for all $m(|x|)$-qubit state $\ket{\psi}$,
    $\Pr \left[U_x \text{ accepts } \ket{\psi} \otimes
      \ket{0}^{\otimes{k(|x|)}} \right] \le 1/3$.
  \end{enumerate}
\end{definition}

\class{QMA} has complete problems. We will make use of the
completeness of the Local Hamiltonian problem, especially its $1$-D
version. Therefore, it will be discussed in more detail although the
main focus of this paper is the Non-Identity Check problem.

Consider a Hamiltonian $H$ of an $n$-particle system with constant
local dimension. $H$ is called $k$-local if it is the sum $\sum_i H_i$
where each $H_i$ acts non-trivially only on $k$ particles. Sometimes,
there is also an underlying layout of the particles in the problem,
for example $1$-D chain or $2$-D lattice, such that each local term
$H_i$ acts only on neighbouring particles corresponding to the layout.
We will call them $1$-D or $2$-D Local Hamiltonian problem
respectively. For $1$-D Hamiltonian $H=\sum H_i$, the particles are
arranged on a line, and each local term $H_i$ acts non-trivially only
on two neighbouring particles.

The general Local Hamiltonian problem can be formalized as in the
following definition.

\begin{definition}[Local Hamiltonian Problem]
  Given a $k$-local Hamiltonian $H=\sum_{i=1}^r H_i$ of $n$ particles
  and two real numbers $a, b$, where $H_i$ has bounded norm and
  $b-a\ge 1/\text{poly}(n)$, $r$ is polynomial in $n$ and $k$ is
  $O(1)$. It is promised that the lowest eigenvalue of $H$ is either
  smaller than $a$ or larger than $b$. Output ``Yes'' in the first
  case and ``No'' otherwise.
\end{definition}

The problem was first shown to be \class{QMA}-Complete for $5$-local
Hamiltonian~\cite{KSV02,AN02}. Recent developments have improved this
to Hamiltonians with much simpler structures -- the $3$-local,
$2$-local, $2$-D, and even $1$-D cases -- all proved to be complete
for \class{QMA}~\cite{KR03,KKR04,OT05,AGIK09}.

Non-Identity Check problem was first considered in Ref.~\cite{JWB05}.
It can be stated as:

\begin{definition}[Non-Identity Check]
  Given a classical description of a quantum circuit $U$ on $n$ qubits
  and two real numbers $a, b$ with $b-a\ge 1/\text{poly}(n)$. It is
  promised that
  \begin{equation*}
    \min\limits_\varphi\; \|U-e^{i\varphi} I\|
  \end{equation*}
  is either larger than $b$ or smaller than $a$. Output ``Yes'' in the
  first case and ``No'' in the second.
\end{definition}

In the definition of the problem, the quantity $\min\limits_\varphi \|
U-e^{i\varphi} I \|$ is used to evaluate the closeness of $U$ to
identity. We can also use phase range $\alpha(U)$ or diamond norm
instead. And it's easy to see that, all the three definitions
mentioned above can be used in defining the Non-Identity Check problem
without changing anything. The point is that they are quantities
related to each other by monotonic trigonometric functions. Moreover,
the inverse polynomial gap in one of them implies that in the others.
In the next section, we will use phase range to define and analyze the
Non-Identity Check problem. It is interesting to note at this point
that the hardness of Non-Identity Check implies that of the estimation
of the diamond norm of the difference of two unitary quantum circuits
to inverse polynomial precision.

\section{Hardness of Non-Identity Check for Short Circuits}

\label{sec:main}

We will prove the hardness of Non-Identity Check problem for short
circuits by reducing the $1$-D Local Hamiltonian problem to it. The
main technical tool is $\alpha,\alpha_{\max},\alpha_{\min}$ discussed
in Section~\ref{sec:pre}. Namely, the following lemmas will be useful
in the proof. The first two can be found in the Appendix of
Ref.~\cite{CPR00} and we won't prove them here.

\begin{lemma}\label{lem:cpr}
  For unitary $U_1$ and $U_2$ such that
  \begin{equation*}
    \begin{split}
      \alpha_{\max} (U_1) + \alpha_{\max} (U_2) & < \pi,\\
      \alpha_{\min} (U_1) + \alpha_{\min} (U_2) & >-\pi,
    \end{split}
  \end{equation*}
  we have
  \begin{equation*}
    \begin{split}
      \alpha_{\max}(U_1U_2) & \le \alpha_{\max}(U_1) +
      \alpha_{\max}(U_2),\\
      \alpha_{\min}(U_1U_2) & \ge \alpha_{\min}(U_1) +
      \alpha_{\min}(U_2).\\
    \end{split}
  \end{equation*}
\end{lemma}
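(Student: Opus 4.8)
The plan is to first \emph{normalize} each factor so that its eigenvalue arguments are symmetric about $0$, turning the two asymmetric hypotheses into a single spread condition, and then to reduce the whole statement to a clean identity at an eigenvector of the product. Writing $a_j = \alpha_{\min}(U_j)$ and $b_j = \alpha_{\max}(U_j)$, set the center $c_j = (a_j+b_j)/2$ and half-spread $r_j = (b_j-a_j)/2 \ge 0$, and factor $U_j = e^{ic_j}R_j$, so that $R_j$ has all eigenvalue arguments in $[-r_j, r_j]$. Since $U_1U_2 = e^{i(c_1+c_2)}R_1R_2$, the two desired inequalities are together equivalent to the statement that every eigenvalue argument of $R_1R_2$ lies in $[-(r_1+r_2),\,r_1+r_2]$. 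Subtracting the two hypotheses gives $2(r_1+r_2) = (b_1+b_2)-(a_1+a_2) < \pi-(-\pi)$, i.e.\ $r_1+r_2<\pi$, and the target interval $[a_1+a_2,\,b_1+b_2]$ lies inside $(-\pi,\pi)$, so no wrap-around of the principal argument occurs. It therefore suffices to prove the symmetric core claim: if $R_1,R_2$ are unitary with argument spreads contained in $[-r_1,r_1]$ and $[-r_2,r_2]$ and $r_1+r_2<\pi$, then $\alpha_{\max}(R_1R_2)\le r_1+r_2$; the matching lower bound follows by applying this to $(R_1R_2)^\dagger=R_2^\dagger R_1^\dagger$, using $\alpha_{\min}(X)=-\alpha_{\max}(X^\dagger)$ and the fact that $R_j^\dagger$ has the same spread.

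The heart of the argument is a single identity at an eigenvector. Suppose $R_1R_2\ket{\psi}=e^{i\theta}\ket{\psi}$ with $\theta\in(-\pi,\pi]$. Rewriting this as $R_2\ket{\psi}=e^{i\theta}R_1^\dagger\ket{\psi}$ and taking the inner product with $\bra{\psi}$ gives
\[ \bra{\psi}R_2\ket{\psi} = e^{i\theta}\,\overline{\bra{\psi}R_1\ket{\psi}}. \]
Writing $z_j=\bra{\psi}R_j\ket{\psi}$, this reads $z_2=e^{i\theta}\bar z_1$; in particular $|z_1|=|z_2|$, and whenever the $z_j$ do not vanish, $\theta\equiv \arg z_1+\arg z_2\pmod{2\pi}$. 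Each $z_j$ is a point of the numerical range of $R_j$, hence a convex combination $\sum_k p_k e^{i\phi_k}$ of eigenvalues with $\phi_k\in[-r_j,r_j]$.

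When both half-spreads satisfy $r_j\le\pi/2$, the relevant arc lies in a convex sector of half-angle $r_j$ about the positive real axis, so $\arg z_j\in[-r_j,r_j]$ and the $z_j$ are nonzero; the identity then yields $\theta=\arg z_1+\arg z_2\in[-(r_1+r_2),\,r_1+r_2]$ directly, which is exactly the core claim. The main obstacle is the remaining regime in which one half-spread exceeds $\pi/2$ (at most one can, since $r_1+r_2<\pi$): there the numerical range of that factor bulges toward the origin, so $\arg z_j$ may exceed $r_j$ and is even undefined if $z_j=0$. I would control this by exploiting the coupling $|z_1|=|z_2|$ together with the fact that the small-modulus part of a numerical-range arc is precisely the part of large argument, so that the constraint $|z_1|=|z_2|\ge\cos r_2$ (valid when $r_2\le\pi/2$) confines $\arg z_1$ enough to keep $\theta\le r_1+r_2$. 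To make this uniform I would run a \emph{continuity/deformation} argument: replace $R_j$ by the path $R_j(t)$ with eigenvalues $e^{it\phi_k}$ for $t\in[0,1]$, so that $R_1(0)R_2(0)=I$ has all arguments $0$ and the spreads scale as $tr_j$. The quantity $\alpha_{\max}(R_1(t)R_2(t))$ varies continuously and obeys the bound $t(r_1+r_2)$ on the initial interval where $tr_j\le\pi/2$; it remains to show it cannot first cross the line $t\mapsto t(r_1+r_2)$, since at such a crossing $\theta=t(r_1+r_2)$ forces an extremal configuration of the $z_j$ that the sector-and-modulus geometry shows to be incompatible with the argument still increasing. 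Verifying this tangency/boundary condition is the one genuinely technical point and the step I expect to require the most care; everything else is the bookkeeping of the reduction and the one-line eigenvector identity.
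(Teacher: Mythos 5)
A preliminary remark: the paper does not prove this lemma at all --- it defers to the appendix of Ref.~\cite{CPR00} --- so there is no internal proof to compare yours against; what follows assesses your argument on its own terms. Your normalization $U_j=e^{ic_j}R_j$, the reduction to the symmetric claim $\alpha_{\max}(R_1R_2)\le r_1+r_2$ under $r_1+r_2<\pi$, and above all the eigenvector identity $\bra{\psi}R_2\ket{\psi}=e^{i\theta}\,\overline{\bra{\psi}R_1\ket{\psi}}$ are all correct and form a viable route; the case $r_1,r_2\le\pi/2$ does go through as you say (the only wrinkle is that at $r_j=\pi/2$ the numerical range can touch the origin, so even there you need the coupling $|z_1|=|z_2|\ge\cos r_{3-j}>0$, which is available because $r_1+r_2<\pi$ forbids both half-spreads from equalling $\pi/2$). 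The genuine gap is the case $\max(r_1,r_2)>\pi/2$: you correctly flag it as the hard case but do not prove it. The continuity/first-crossing scheme is only a sketch, and its load-bearing step --- that at a first crossing of $\alpha_{\max}(R_1(t)R_2(t))$ with $t(r_1+r_2)$ the extremal configuration is ``incompatible with the argument still increasing'' --- is essentially the inequality to be proved, restated at the crossing time; nothing in the sketch supplies the geometric fact that would rule it out. As written, the proof is incomplete.

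The good news is that the gap closes directly, with no homotopy, using only the half-plane description of the numerical range. Say $r_1>\pi/2>r_2$ and set $\rho=|z_1|=|z_2|\le 1$. The numerical range of $R_j$ lies in the convex hull of the arc $\{e^{i\phi}:|\phi|\le r_j\}$, hence in the half-plane $\{w:\mathrm{Re}\,w\ge\cos r_j\}$. Therefore $\rho\ge\cos r_2\ge-\cos r_1>0$ (the middle inequality is just $r_2\le\pi-r_1$), so both $z_j$ are nonzero, and $\rho\cos(\arg z_j)=\mathrm{Re}\,z_j\ge\cos r_j$ gives $|\arg z_j|\le\arccos(\cos r_j/\rho)$. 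Hence $|\arg z_1+\arg z_2|\le g(\rho)$ where $g(\rho)=\arccos(\cos r_1/\rho)+\arccos(\cos r_2/\rho)$. One computes
\begin{equation*}
  g'(\rho)=\frac{1}{\rho}\left(\frac{\cos r_1}{\sqrt{\rho^2-\cos^2 r_1}}+\frac{\cos r_2}{\sqrt{\rho^2-\cos^2 r_2}}\right),
\end{equation*}
which can vanish only if $\cos^2 r_1=\cos^2 r_2$ (impossible when $r_1\ne r_2$ and $r_1+r_2<\pi$), and which at $\rho=1$ equals $\cot r_1+\cot r_2=\sin(r_1+r_2)/(\sin r_1\sin r_2)>0$; so $g$ is increasing on its domain and $|\arg z_1+\arg z_2|\le g(\rho)\le g(1)=r_1+r_2<\pi$. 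Since $\theta\in(-\pi,\pi]$ is congruent to $\arg z_1+\arg z_2$ modulo $2\pi$, there is no wrap-around and $|\theta|\le r_1+r_2$. Substituting this computation for your deformation argument turns the proposal into a complete, self-contained proof.
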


\begin{lemma}\label{lem:alpha_bound}
  For Hermitian $H, K$ and $-\pi< H+K < \pi$,
  \begin{equation}
    \label{eq:alpha_bound}
    \begin{split}
      \alpha_{\max} (e^{iH} e^{iK}) & \le \alpha_{\max} (e^{i(H+K)}),\\
      \alpha_{\min} (e^{iH} e^{iK}) & \ge \alpha_{\min} (e^{i(H+K)}).\\
    \end{split}
  \end{equation}
\end{lemma}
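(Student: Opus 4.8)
The plan is to prove only the upper bound $\alpha_{\max}(e^{iH}e^{iK}) \le \alpha_{\max}(e^{i(H+K)})$; the lower bound for $\alpha_{\min}$ then follows by symmetry. Since $H+K$ is Hermitian with spectrum in $(-\pi,\pi)$, its exponential has arguments exactly equal to its eigenvalues, so $\alpha_{\max}(e^{i(H+K)}) = \lambda_{\max}(H+K)$ and $\alpha_{\min}(e^{i(H+K)}) = \lambda_{\min}(H+K)$, and it suffices to show $\alpha_{\max}(e^{iH}e^{iK}) \le \lambda_{\max}(H+K)$. Applying this inequality to the pair $(-K,-H)$ gives $\alpha_{\max}(e^{-iK}e^{-iH}) \le \lambda_{\max}(-(H+K)) = -\lambda_{\min}(H+K)$; since $e^{-iK}e^{-iH} = (e^{iH}e^{iK})^\dagger$ and the arguments of a unitary and its adjoint are negatives of one another, $\alpha_{\max}(U^\dagger) = -\alpha_{\min}(U)$, this rearranges to $\alpha_{\min}(e^{iH}e^{iK}) \ge \lambda_{\min}(H+K)$, the second claim.

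For the upper bound I would introduce the homotopy $U(s) = e^{isH}e^{isK}$ for $s\in[0,1]$, so $U(0) = I$ and $U(1) = e^{iH}e^{iK}$. Differentiating, $\tfrac{d}{ds}U(s) = i\,\tilde H(s)\,U(s)$ with the Hermitian generator $\tilde H(s) = H + e^{isH}Ke^{-isH}$. The crucial observation is that $e^{isH}$ commutes with $H$, so $e^{isH}He^{-isH} = H$ and we may rewrite $\tilde H(s) = e^{isH}(H+K)e^{-isH}$. Thus $\tilde H(s)$ is a unitary conjugate of $H+K$, hence isospectral to it; in particular $\lambda_{\max}(\tilde H(s)) = \lambda_{\max}(H+K)$ for every $s$.

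Let $\phi_{\max}(s)$ be the largest eigenvalue argument of $U(s)$, chosen continuously with $\phi_{\max}(0) = 0$. Where the top eigenvalue is simple, the eigenrelation $U(s)\ket{v(s)} = e^{i\phi_{\max}(s)}\ket{v(s)}$ together with the normality of $U(s)$ gives, by first-order (Hellmann--Feynman) perturbation theory, $\dot\phi_{\max}(s) = \bra{v(s)}\tilde H(s)\ket{v(s)} \le \lambda_{\max}(\tilde H(s)) = \lambda_{\max}(H+K)$. At the isolated parameters where eigenvalues cross, $\phi_{\max}$ is a maximum of finitely many smooth branches, hence Lipschitz, and its upper derivative is still bounded by $\max_{\ket{v}}\bra{v}\tilde H(s)\ket{v} = \lambda_{\max}(H+K)$. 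Integrating this differential inequality over $[0,1]$ and using $\phi_{\max}(0)=0$ yields $\alpha_{\max}(e^{iH}e^{iK}) = \phi_{\max}(1) \le \lambda_{\max}(H+K)$.

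Two points deserve care and form the main technical obstacle. First, the continuous branch $\phi_{\max}(s)$ must not wind around the circle: I need every eigenvalue argument of $U(s)$ to remain inside $(-\pi,\pi)$ for $s\in[0,1]$, so that $\alpha_{\max}(U(1))$ genuinely equals $\phi_{\max}(1)$. This is exactly where the hypothesis $-\pi < H+K < \pi$ is used; combined with the companion estimate $\dot\phi_{\min}(s) \ge \lambda_{\min}(\tilde H(s)) = \lambda_{\min}(H+K) > -\pi$, it confines every argument to $(-\pi,\pi)$ throughout the homotopy and removes any branch ambiguity. Second, the derivative step must be justified across eigenvalue crossings; the clean route is to bound the upper Dini derivative of the Lipschitz function $\phi_{\max}$ rather than to assume differentiability, which avoids any genericity assumption on $H$ and $K$. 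Once these are handled, the isospectrality of $\tilde H(s)$ with $H+K$ carries the argument, and it is precisely this exact conjugation structure -- rather than the cruder Weyl bound $\lambda_{\max}(\tilde H(s)) \le \lambda_{\max}(H)+\lambda_{\max}(K)$, which would only reproduce Lemma~\ref{lem:cpr} -- that sharpens the estimate to the form involving $\lambda_{\max}(H+K)$.
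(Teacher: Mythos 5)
The paper offers no proof of this lemma---it is quoted verbatim from the appendix of~\cite{CPR00} with the remark that it will not be proved here---so there is nothing internal to compare against. Your argument is correct: the identity $\tilde H(s)=H+e^{isH}Ke^{-isH}=e^{isH}(H+K)e^{-isH}$ is exactly the right observation, the resulting velocity bound $\lambda_{\min}(H+K)\le\dot\phi\le\lambda_{\max}(H+K)$ both confines every phase to $(-\pi,\pi)$ (which is where the hypothesis enters) and yields the two inequalities simultaneously at $s=1$, and the technical points you flag (crossings via Dini derivatives, no winding) are the genuine ones and are handled correctly. This eigenvalue-flow argument is essentially the proof given in~\cite{CPR00}, so you have reconstructed the intended proof rather than found a new one; note only that the separate symmetry reduction for $\alpha_{\min}$ is redundant, since the homotopy already delivers both bounds at once.
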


\begin{lemma}\label{lem:alpha_tri}
  $\alpha(U_1,U_2) \le \alpha(U_1)+\alpha(U_2)$.
\end{lemma}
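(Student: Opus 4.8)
The plan is to treat the phase range $\alpha$ as a triangle-inequality-style pseudometric on unitaries and reduce the claim to Lemma~\ref{lem:cpr} after normalizing global phases. First I would record two elementary invariances of $\alpha$. Since the eigenvalues of $U^\dagger$ are the complex conjugates of those of $U$, reflecting all arguments preserves the shortest enclosing arc, so $\alpha(U^\dagger)=\alpha(U)$; and since multiplying a unitary by a global phase $e^{i\theta}$ rotates every eigenvalue by the same amount, it leaves $\tilde{\alpha}$ and $\alpha$ unchanged. Writing $\alpha(U_1,U_2)=\alpha(U_1^\dagger U_2)$ and setting $V_1=U_1^\dagger$, $V_2=U_2$, the goal becomes $\alpha(V_1V_2)\le\alpha(U_1)+\alpha(U_2)$ with $\alpha(V_1)=\alpha(U_1)$ and $\alpha(V_2)=\alpha(U_2)$.

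Next I would split on the size of $\alpha(U_1)+\alpha(U_2)$. If $\alpha(U_1)+\alpha(U_2)\ge\pi$ the inequality is immediate because $\alpha(V_1V_2)=\min\{\pi,\tilde{\alpha}(V_1V_2)\}\le\pi$ by definition. So assume $\alpha(U_1)+\alpha(U_2)<\pi$; then each $\alpha(V_i)<\pi$, which means $\tilde{\alpha}(V_i)=\alpha(V_i)$ and the eigenvalues of $V_i$ fit strictly inside a half-circle. I can therefore choose global phases and replace $V_i$ by $e^{i\phi_i}V_i$ so that the eigenvalue arguments of the rotated operator are symmetric about $0$, i.e. $\alpha_{\max}(e^{i\phi_i}V_i)=\alpha(V_i)/2$ and $\alpha_{\min}(e^{i\phi_i}V_i)=-\alpha(V_i)/2$. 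This is legitimate precisely because $\alpha(V_i)/2<\pi/2$, so the centered arc stays inside $(-\pi,\pi]$.

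With this normalization the hypotheses of Lemma~\ref{lem:cpr} hold, since $\alpha_{\max}(e^{i\phi_1}V_1)+\alpha_{\max}(e^{i\phi_2}V_2)=(\alpha(U_1)+\alpha(U_2))/2<\pi/2<\pi$ and the corresponding sum of the $\alpha_{\min}$ values exceeds $-\pi$. Applying the lemma bounds the extreme arguments of $e^{i(\phi_1+\phi_2)}V_1V_2=e^{i(\phi_1+\phi_2)}U_1^\dagger U_2$ by $\pm(\alpha(U_1)+\alpha(U_2))/2$, so all of its eigenvalues lie in a single arc of length at most $\alpha(U_1)+\alpha(U_2)$ that does not wrap around the circle. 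Hence $\tilde{\alpha}(e^{i(\phi_1+\phi_2)}V_1V_2)\le\alpha(U_1)+\alpha(U_2)$, and by the global-phase invariance noted above this equals $\tilde{\alpha}(V_1V_2)$; taking the minimum with $\pi$ yields $\alpha(V_1V_2)\le\alpha(U_1)+\alpha(U_2)$, as desired.

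The one point requiring care is the passage from the $\alpha_{\max}/\alpha_{\min}$ bounds back to $\tilde{\alpha}$: I must ensure that no wrap-around artificially inflates the enclosing arc, since in general $\tilde{\alpha}$ is the \emph{shortest} arc and need not equal $\alpha_{\max}-\alpha_{\min}$. Confining all arguments of the product to $(-\pi/2,\pi/2)$ via the centering step is exactly what guarantees that the arc from $\alpha_{\min}$ to $\alpha_{\max}$ through $0$ genuinely contains every eigenvalue and has length $\alpha_{\max}-\alpha_{\min}$, giving the clean upper bound on $\tilde{\alpha}$. This is the main obstacle, and it is precisely why the case distinction at $\pi$ and the sub-half-circle hypothesis of Lemma~\ref{lem:cpr} are both needed.
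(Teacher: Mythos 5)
Your proof is correct and follows essentially the same route as the paper's: reduce to $U_1^\dagger$ and $U_2$, dispose of the trivial case where the bound $\pi$ already suffices, recenter each factor by a global phase so its eigenvalue arguments are symmetric about $0$, and invoke Lemma~\ref{lem:cpr}. You are merely more explicit than the paper about why $\tilde{\alpha}$ of the product is bounded by $\alpha_{\max}-\alpha_{\min}$ without wrap-around, which is a worthwhile clarification but not a different argument.
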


\begin{proof}
  If either $\alpha(U_1)$ or $\alpha(U_2)$ equals $\pi$, the above
  equation obviously holds. Now if both $\alpha(U_1)$ and
  $\alpha(U_2)$ is less than $\pi$, we can choose phases $\varphi_1$
  and $\varphi_2$ such that
  \begin{equation*}
    U_1^\dagger = e^{i\varphi_1} V_1, U_2 = e^{i\varphi_2} V_2,
  \end{equation*}
  and $V_1$ and $V_2$ have eigenvalues of arguments in
  $(-\pi/2,\pi/2)$. The condition in Lemma~\ref{lem:cpr} holds for
  $V_1$ and $V_2$ and it follows that $\alpha(V_1V_2) \le
  \alpha(V_1)+\alpha(V_2)$ which finishes the proof by noticing that
  $\alpha(U)$ is invariant under the change of a global phase in $U$.
\end{proof}

It's interesting to note that Lemma~\ref{lem:alpha_tri} implies that
$\alpha(U_1,U_2)$ is a distance measure on the space of $U(d)/U(1)$. Specifically,
\begin{equation*}
  \alpha(U_1,U_3) = \alpha(U_1^\dagger U_2 U_2^\dagger U_3)
  \le \alpha(U_1^\dagger U_2) + \alpha(U_2^\dagger U_3)
  = \alpha(U_1,U_2) + \alpha(U_2,U_3).
\end{equation*}

\begin{lemma}
  \label{lem:alpha_diff}
  For unitary $U$ and $V$, $|\alpha(U)-\alpha(V)| \le \pi \|U-V\|$.
\end{lemma}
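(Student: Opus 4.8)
The plan is to split the two-unitary Lipschitz bound into a single-unitary estimate combined with the metric structure already in hand. First I would reduce the claim to showing that $\alpha(W) \le \pi\|W-I\|$ holds for an arbitrary unitary $W$. Writing $\alpha(U)=\alpha(I,U)$ and applying the triangle inequality from Lemma~\ref{lem:alpha_tri} twice gives $\alpha(U)\le \alpha(V)+\alpha(V^\dagger U)$ and, symmetrically, $\alpha(V)\le \alpha(U)+\alpha(U^\dagger V)$. Since the eigenvalues of $(U^\dagger V)^\dagger$ are the complex conjugates of those of $U^\dagger V$, reflection across the real axis preserves arc length and shows $\alpha(V^\dagger U)=\alpha(U^\dagger V)$, so that $|\alpha(U)-\alpha(V)|\le \alpha(U^\dagger V)$. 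Because the spectral norm is unitarily invariant, $\|U^\dagger V - I\| = \|V-U\| = \|U-V\|$, and hence the single-unitary inequality applied to $W=U^\dagger V$ completes the argument.

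To prove $\alpha(W)\le \pi\|W-I\|$, I would express both sides through the eigenphase of largest absolute value. Let the eigenvalues of $W$ be $e^{i\theta_j}$ with $\theta_j\in(-\pi,\pi]$ and set $\beta=\max_j|\theta_j|$. On one hand, all eigenvalues lie in the arc of arguments $[-\beta,\beta]$ of length $2\beta$, so $\alpha(W)\le\tilde\alpha(W)\le 2\beta$. On the other hand, $W-I$ is normal, so its spectral norm equals $\max_j|e^{i\theta_j}-1| = \max_j 2|\sin(\theta_j/2)| = 2\sin(\beta/2)$, the last equality holding because $|\sin(\theta/2)|$ is increasing in $|\theta|$ over $(-\pi,\pi]$. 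It therefore suffices to establish the scalar inequality $2\beta \le 2\pi\sin(\beta/2)$ for $\beta\in[0,\pi]$, from which $\alpha(W)\le 2\beta \le \pi\|W-I\|$ follows at once, with no separate treatment of the capped case $\tilde\alpha(W)\ge\pi$.

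The scalar inequality reduces, after the substitution $t=\beta/2\in[0,\pi/2]$, to $t/\sin t \le \pi/2$, which follows from the monotonicity of $t/\sin t$ on $(0,\pi/2]$ (its derivative has numerator $\sin t - t\cos t$, which is positive there) together with the boundary value $(\pi/2)/\sin(\pi/2)=\pi/2$. I expect the main subtlety to lie not in this routine calculus but in the geometric step above: the shortest arc measured by $\tilde\alpha(W)$ is in general not centred at the eigenvalue $1$ against which $\|W-I\|$ is taken, so the two quantities cannot be compared directly. The observation that unlocks the proof is that the single parameter $\beta$ simultaneously bounds the arc length from above, via $\tilde\alpha(W)\le 2\beta$, and exactly determines the norm, via $\|W-I\|=2\sin(\beta/2)$, after which only the one-variable comparison remains.
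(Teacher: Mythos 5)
Your proof is correct and follows essentially the same route as the paper: both reduce $|\alpha(U)-\alpha(V)|$ to $\alpha(U^\dagger V)$ via the triangle-inequality property of $\alpha$, and then bound $\alpha(U^\dagger V)$ by $\pi\|U-V\|$ using the Jordan inequality $\sin t \ge 2t/\pi$. The only cosmetic difference is that the paper invokes the pre-established relation $\min_\varphi\|W-e^{i\varphi}I\|\ge 2\sin(\alpha(W)/4)$ of Eq.~\eqref{eq:sin_alpha}, whereas you rederive the norm--eigenphase link directly from the normality of $W-I$; both give the same constant.
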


\begin{proof}
  Since $\alpha$ is a distance measure,
  \begin{equation*}
    |\alpha(U)-\alpha(V)| = |\alpha(U,I)-\alpha(V,I)| \le \alpha
    (U^\dagger V).
  \end{equation*}
  Using Eq.~\eqref{eq:sin_alpha} and $\sin(x)\ge 2x/\pi$
  for $x\in [0,\pi/2]$, we have
  \begin{equation*}
    \|U-V\| \ge \min_\varphi \|U-e^{i\varphi}V\| \ge
    2\sin\frac{\alpha(U^\dagger V)}{4} \ge \frac{1}{\pi} \alpha(U^\dagger
    V) \ge \frac{1}{\pi} |\alpha(U)-\alpha(V)|.
  \end{equation*}
\end{proof}

\begin{lemma}\label{lem:alpha_t}
  For Hermitian $H, K$ and $0\le H, K, H+K \le \pi$, $0<t<1$,
  \begin{equation}
    \label{eq:alpha_t}
    |\alpha(e^{iHt} e^{iKt}) - \alpha(e^{i(H+K)t})| \le ct^2,
  \end{equation}
  where $c$ is a constant independent of $H, K$ and $t$.
\end{lemma}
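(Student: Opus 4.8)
The plan is to reduce everything to the operator-norm Lipschitz estimate of Lemma~\ref{lem:alpha_diff}, which immediately gives
\begin{equation*}
  |\alpha(e^{iHt} e^{iKt}) - \alpha(e^{i(H+K)t})|
  \le \pi \, \|e^{iHt} e^{iKt} - e^{i(H+K)t}\|.
\end{equation*}
Thus it suffices to show that the product $e^{iHt} e^{iKt}$ agrees with the single exponential $e^{i(H+K)t}$ to within $O(t^2)$ in spectral norm, with an implied constant depending only on $\pi$. This is a quantitative, first-order Baker--Campbell--Hausdorff estimate, and establishing it is the technical heart of the lemma.

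To carry this out, I would set $f(t) = e^{iHt} e^{iKt}$ and $g(t) = e^{i(H+K)t}$, regarded as smooth matrix-valued functions of the real parameter $t$. Both satisfy $f(0) = g(0) = I$, and a direct differentiation gives $f'(0) = i(H+K) = g'(0)$; the essential point is that the two curves agree through first order at the origin, so their difference starts only at order $t^2$. To make this precise I would invoke the second-order Taylor expansion with integral remainder, valid in any Banach algebra and hence for matrices in spectral norm:
\begin{equation*}
  f(t) - g(t) = \int_0^t (t-s)\bigl(f''(s) - g''(s)\bigr)\, ds.
\end{equation*}

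The remaining step is a uniform bound on the two second derivatives. Differentiating twice yields $g''(s) = -(H+K)^2 e^{i(H+K)s}$ and $f''(s) = -H^2 e^{iHs} e^{iKs} - 2 H e^{iHs} K e^{iKs} - e^{iHs} K^2 e^{iKs}$, and since $e^{iHs}$ and $e^{iKs}$ are unitary the norms are governed by $\|H\|$ and $\|K\|$ alone. Here the hypothesis $0 \le H, K, H+K \le \pi$ enters decisively: it forces all eigenvalues into $[0,\pi]$, so $\|H\|, \|K\|, \|H+K\| \le \pi$ and therefore $\|f''(s)\| + \|g''(s)\| \le 5\pi^2$ uniformly in $s$ and in the choice of $H, K$. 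Feeding this into the integral remainder gives $\|f(t) - g(t)\| \le \tfrac{5\pi^2}{2}\, t^2$, and combining with the Lipschitz estimate above yields the claim with $c = \tfrac{5\pi^3}{2}$.

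I do not expect a genuine obstacle here; the only thing to watch is ensuring that the constant is truly uniform. That is precisely the role of the eigenvalue constraints in the hypothesis, which let me bound the operator norms of the derivatives by a fixed multiple of $\pi^2$ rather than by any $t$- or matrix-dependent quantity, so that $c$ comes out independent of $H$, $K$, and $t$ as required.
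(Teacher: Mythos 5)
Your proposal is correct and follows the same route as the paper: both reduce the claim to the bound $\|e^{iHt}e^{iKt}-e^{i(H+K)t}\|\le c_1 t^2$ and then apply Lemma~\ref{lem:alpha_diff}. The paper leaves that norm bound as an ``easy to show'' consequence of the exponential series, whereas you supply the details via Taylor's theorem with integral remainder and even extract an explicit constant $c=\tfrac{5\pi^3}{2}$, which is a fully valid instantiation of the same argument.
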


\begin{proof}
  Using the expansion of the matrix exponential function and the
  condition $0 \le H,K,H+K \le \pi$, $0<t<1$, it's easy to show that
  there exists some constant $c_1$ such that
  \begin{equation*}
    \| e^{iHt} e^{iKt} - e^{i(H+K)t} \| \le c_1 t^2.
  \end{equation*}
  The inequality follows immediately from
  Lemma~\ref{lem:alpha_diff}.
\end{proof}

With these results in hand, we start the proof of the main result,
Theorem ~\ref{thm:const-depth}.

\begin{proof}[Proof of Theorem~\ref{thm:const-depth}]

  As Non-Identity Check of short circuit is a special case of the
  general Non-Identity Check problem, the fact that it is in
  \class{QMA} follows from the previous result in Ref.~\cite{JWB05}.
  It suffices to prove the hardness result only. We will reduce the
  $1$-D Local Hamiltonian problem to it.

  Suppose we are given an instance of the $1$-D Local Hamiltonian
  problem which has input $H=\sum_{i=1}^r H_i$ and real numbers $a, b$
  with at least inverse polynomial gap. $H$ is a Hamiltonian of an $n$
  particle system with local dimension $d$ and each term $H_i$ is an
  operator on two neighbouring particles which can be described by a
  $d^2$ by $d^2$ Hermitian matrix. It is a ``Yes'' instance if there
  exists some density matrix $\rho$ such that $\tr(H\rho)\le a$ and a
  ``No'' instance if $\tr(H\rho)\ge b$ for all $\rho$. This problem is
  known to be \class{QMA}-Complete for $d\ge12$. For simplicity, one
  can always rescale the problem and assume that $H_i$'s are positive
  semidefinite and $\|H_i\|\le 1$.

  Note that $1$-D property of the problem allows us to write $H$ as
  $H_{\text{odd}} + H_{\text{even}}$ where $H_{\text{odd}}$ and
  $H_{\text{even}}$ each contain local terms acting on different particles.
  This is illustrated in Figure~\ref{fig:1d}. $H_{\text{odd}}$ is the
  sum of $H_1, H_3, H_5, \ldots$ where $H_1$ acts on particle $1$ and
  $2$, $H_3$ acts on particle $3$ and $4$, etc. $H_{\text{even}}$
  consists of $H_2, H_4, \ldots$ where $H_2$ acts on particle $2$ and
  $3$, $H_4$ acts on particle $4$ and $5$, etc.

  \begin{figure}[!hbt]
    \centering
    \includegraphics{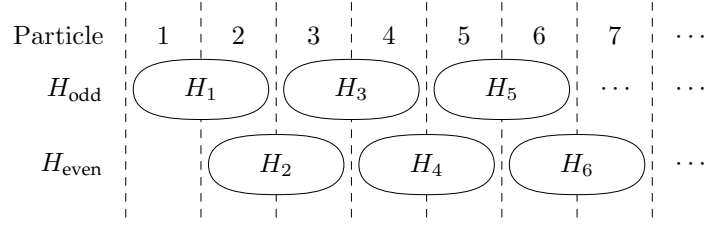}
    \caption{1-D Local Hamiltonian}
    \label{fig:1d}
  \end{figure}

  The first step in the reduction is to modify the Hamiltonian such
  that it will have the maximal possible eigenvalue $r$, where $r$ is
  the number of local terms. To do this, we add an additional
  dimension to each particle and label it with $\ket{d}$ and consider
  the Hamiltonian with local terms
  \begin{equation*}
    \tilde{H}_i = \oprod{d}{d}\otimes\oprod{d}{d} + H_i.
  \end{equation*}
  It should be understood that $H_i$ acts trivially when either
  particle on which it acts is in state $\ket{d}$. Let $\tilde{H}$ be
  the sum $\sum_i\tilde{H}_i$. It's obvious that $\ket{d}^{\otimes n}$
  is an eigenstate of $\tilde{H}$ with eigenvalue $r$ while the
  smallest eigenvalue of $\tilde{H}$ equals that of $H$. The $1$-D
  Hamiltonian problem of $H$ is now reduced to deciding if
  $\lambda(\tilde{H})$, the eigenvalue range of $\tilde{H}$, is larger
  than $r-a$ or smaller than $r-b$. The eigenvalue range problem can
  be viewed as the Hamiltonian version of the Non-Identity Check
  problem for circuits. We will show that it's possible to use circuit
  Non-Identity Check to solve this eigenvalue range problem of local
  Hamiltonian.

  Before further reducing the problem, we normalize $\tilde{H}$ by
  dividing $2r/\pi$ so that the conditions of the lemmas we will use
  are met. Denote the normalized Hamiltonian again with $H$ and its
  local terms with $H_i$ for simplicity; but they are no longer the
  same as they were in the original $1$-D Local Hamiltonian problem.
  After that, we have $\|H\| \le \pi/2$. Let $l$ and $s$ be
  $(r-a)\pi/2r$ and $(r-b)\pi/2r$ respectively. It's a ``Yes''
  instance if $\lambda(H)\ge l$ and a ``No'' instance if
  $\lambda(H)\le s$. Notice that $l$ and $s$ have inverse polynomial
  gap.

  We can now construct a Non-Identity problem as follows. The circuit
  is simply
  \begin{equation}
    \label{eq:U_H}
    U_H = e^{iH_{\text{even}}t} e^{iH_{\text{odd}}t},
  \end{equation}
  and the two threshold real numbers $a=st$, $b=lt-ct^2$. Here, $t$ is
  chosen to be $(l-s)/2c$ where $c$ is the constant in
  Lemma~\ref{lem:alpha_t}. It's easy to check that $b-a$ has at least
  inverse polynomial gap. As the local terms $H_1,H_3,H_5,\ldots$ in
  $H_{\text{odd}}$ are on different particles, $e^{iH_{\text{odd}}t}$
  equals the tensor product of $e^{iH_1t},e^{iH_3t},e^{iH_5t},\ldots$
  and can be implemented in parallel. Similar property holds for
  $H_{\text{even}}$. Therefore $U_H$ is indeed a constant depth
  circuit.

  Since $\lambda(H)$ is promised either larger than $l$ or smaller
  than $s$, we can verify that the promise for the above Non-Identity
  problem also holds. If $\lambda(H)$ is larger than $l$, it follows
  from Lemma~\ref{lem:alpha_t} that $\alpha(U_H)$ is at least
  \begin{equation*}
    \alpha(e^{iHt}) - ct^2 = \lambda(Ht) -ct^2 \ge lt-ct^2 = b.
  \end{equation*}
  If $\lambda(H)$ is smaller than $s$, Lemma~\ref{lem:alpha_bound}
  implies that $\alpha(U_H)$ is at most
  \begin{equation*}
    \alpha(e^{iHt}) \le st = a.
  \end{equation*}
  It's also easy to check that the eigenvalue range problem of $H$ is
  a ``Yes'' (or ``No'') instance if and only if the Non-Identity Check
  problem of $U_H$ is a ``Yes'' (or ``No'') instance.
\end{proof}

It's worth noting that the main idea in the proof is highly related to
quantum simulation using Trotter expansion
\begin{equation*}
  e^{A+B} = \lim_{n\rightarrow\infty} (e^{A/n} e^{B/n})^n.
\end{equation*}
Fortunately however, it is enough to simulate the first round of
$e^{A/n} e^{B/n}$ and leave the amplification procedure to the
verifier.

The circuit we constructed above contains quantum gates such as
$e^{iH_1t}$ which need $\Omega(\log n)$ bits to specify. In order to
translate the result to the case where only a finite universal set of
quantum gates are allowed, we need to expand each gate in the circuit
using Solovay-Kitaev theorem. This will give us the result in
Corollary~\ref{cor:polylog-depth}. The main problem here is to analyze
how the imperfections in each gate will affect the phase range
$\alpha$ of the circuit. Suppose we want to use unitary gates $U_1$
and $U_2$ but the actual implementations are unitary gates $V_1$ and
$V_2$ with $V_1=U_1+E_1$ and $V_2=U_2+E_2$, then
\begin{equation*}
  \begin{split}
    \| V_1V_2-U_1U_2 \| & =
    \| (U_1+E_1)(U_2+E_2)-U_1U_2\|\\
    & = \|U_1E_2+E_1(U_2+E_2)\|\\
    & = \|U_1E_2+E_1V_2\|\\
    & \le \|E_1\|+\|E_2\|,
  \end{split}
\end{equation*}
and similarly,
\begin{equation*}
  \| V_1\otimes V_2-U_1\otimes U_2 \| \le \|E_1\|+\|E_2\|.
\end{equation*}
These two facts and Lemma~\ref{lem:alpha_diff} imply that for any
circuit $C$ and its imperfect implementation $C'$
\begin{equation*}
  |\alpha(C) - \alpha(C')| \le\pi \|C-C'\| \le \pi\sum_i\|E_i\|,
\end{equation*}
where $E_i$'s are the errors in all the gates of $C'$. Thus, the total
error in $\alpha$ of the circuit is at most $\pi$ times the summation
of norms of all errors in each gate. It can be made inverse polynomial
small and much smaller than the gap of threshold parameter $a$ and
$b$. This validates the claim in Corollary~\ref{cor:polylog-depth}.

It's proved in Ref.~\cite{HRC02} that there exists some universal gate
set such that only $O(\log(1/\epsilon))$ number of gates are required
to achieve an error bound of $\epsilon$. A similar argument as above
gives the proof of Corollary~\ref{cor:log-depth}.

\section{Conclusion}

\label{sec:con}

In this paper, we conclude that Non-Identity Check for constant depth
quantum circuit is \class{QMA}-Complete given $\Omega(\log n)$ bit of
precision to each gate. However, the depth may vary when using a fixed
universal gate set. Employing different versions of Solovay-Kitaev
theorem, we are able to prove the hardness for circuits of
polylogarithmic or even logarithmic depth.

It is interesting to compare our result with the problem of
distinguishing mixed state quantum computation in terms of the diamond
norm~\cite{AKN98}. Although the main difference is simply whether some
output are discarded or not, the problem of distinguishing mixed state
quantum computation seems to be much harder. In fact, it was shown to
be \class{QIP}-Complete~\cite{RW05}. Rosgen~\cite{Ros08} further
proved that logarithmic depth quantum circuits are as hard to
distinguish as polynomial depth quantum circuit and thus
distinguishing logarithmic depth mixed state quantum circuits remains
\class{QIP}-Complete.

We leave the question of the complexity of Non-Identity Check for
constant depth quantum circuits with gates from a finite universal
gate set as an interesting open problem.

\section*{Acknowledgment}

The authors would like to thank Daniel Gottesman, Richard Cleve and
John Watrous for helpful discussions. Research at Perimeter Institute
is supported by the Government of Canada through Industry Canada and
by the Province of Ontario through the Ministry of Research \&
Innovation.

\small
\bibliographystyle{abbrv}
\bibliography{qma-short}

\end{document}